\def\fnum@figure{\figurename\thefigure}
\renewcommand{\figurename}{Fig.}
\newtheorem{theorem}{Theorem}
\journal{Physics Letter A}
\title{Weak measurement as a tool for studying coherence and quantum correlations in bipartite systems}
\begin{document}
  \author{Indrajith V S$^{* @}$ , R. Muthuganesan$^\ddag$ R. Sankaranarayanan$^*$ }
\address{$^*$ Department of Physics, National Institute of Technology\\ Tiruchirappalli-620 015, Tamil Nadu, India.}
\address{$@$ Department of Physics, Indian Institute of Science Education and Research\\ Mohali-140306, Punjab, India.}
\address{$^\ddag$ Department of Physics, Faculty of Nuclear Sciences and Physical Engineering,
Czech Technical University in Prague, Brehova 7, 115 19 Praha 1-Stare Mesto, Czech Republic}
\begin{abstract}
In this article, we study quantum coherence of bipartite state from the perspective of weak measurement, which generalizes the notion of coherence relative to measurement. The is being illustrated by computing coherence for the well-known Bell diagonal and Wener states. We have also extended our investigation on quantum correlation measure and uncertainty relation in the weak measurement regime.  
\end{abstract}
\begin{keyword}
Weak measurement, coherence, uncertainty relation.
\end{keyword}
\maketitle
\section{Introduction}
Coherence is an important aspect of quantum regime that arises from the property of superposition \cite{coh}. It is also considered as a physical resource like entanglement, and so useful for various information processing tasks \cite{qkd}-\cite{quant_bio4}. Unlike entanglement, quantum coherence is a basis dependent quantity that can be defined for single and multipartite systems as well. Coherence plays an indispensable role in various fields such as quantum thermodynamics \cite{quant_thrmo_1, quant_thrmo_2, quant_thrmo_3}, communication \cite{quat_comm_1, quat_comm_2, quat_comm_3}, metrology \cite{coh_met1, coh_met2}, quantum biology \cite{quant_bio1, quant_bio2, quant_bio3, quant_bio4} etc. Admired by the seminal framework of Baumgratz \cite{coh}, many such coherence measures have been proposed based on both entropy and distance measures \cite{cohe_hellin} - \cite{ entropy_cohe}. By virtue of its growing importance and application, from interpreting pure states to algorithms, coherence is getting much more attention nowadays \cite{cohe_algo}.

Coherence is a prominent feature of quantumness while measurement can be used for the scaling of quantumness.
It is important to say that the foundation of quantum mechanics is built upon the measurement postulates.
Quantum systems are fragile to measurement, and so the understanding of the measurement process is always crucial in the quantum world. It should be reminded that when we intend to measure a quantum state, we lose its coherence completely. There is an engrossing way to keep the system of interest without completely losing the coherence, and so the quantumness, during measurement. The new scheme, which weakly couples the measurement device with the system, does not collapse the state vector, rather biased by a small angle such that the measurement device does not show a clear eigenvalue but a superposition of several \cite{weak_mmt, weak2}. Hence, we can say that a weak value of an observable can take values outside its spectrum. It shall be noted that weak measurement is universal in the sense that a generalized measurement can be perceived as a sequence of weak measurements. This ascendancy of weak measurements has been utilized in many areas.
There exists a close relation between uncertainty of measurements and coherence of a quantum state. Hence, uncertainty is being widely studied along with coherence \cite{Cohe_uncer, part_cohe, coh_Vs_Uncer}.

In this work, we investigate the use of weak measurements via Hellinger distance to calculate geometric coherence. We establish the relationship between the proposed coherence measure and the well-known geometric coherence based on von Neumann projective measurement. Moreover, we demonstrate that the proposed coherence measure satisfies all the axioms of a good measure of correlation in the weak measurement regime. Additionally, we explore the uncertainty of a quantum state relative to weak measurement by calculating weak variance.   

\section{Coherence relative to measurement}\label{sec2}
In order to define a coherence measure based on the Hellinger distance relative to weak measurement, we first recall the definition of a coherence measure based on the Hellinger distance. The Hellinger distance between states $\sigma_1$ and $\sigma_2$ is defined as
\begin{equation*}
\mathcal{D}_H(\sigma_1, \sigma_2) = \lVert \sqrt{\sigma_1}-  \sqrt{\sigma}_2\rVert^2
\end{equation*}
where $ \|\mathcal{O} \| = \sqrt{\text{tr}(\mathcal{O}^\dagger \mathcal{O})}$ represents the Hilbert-Schmidt norm of an operator $ \mathcal{O}$. One can define coherence of a state $\rho$ as
\begin{equation}
C_H(\rho| \sigma) =  \lVert \sqrt{\rho}-  \sqrt{\sigma}\rVert^2
\end{equation}
where $\sigma \in \mathcal{I}$ is a set of incoherent states.
We take a special class of coherence where incoherent states are produced by measurements.  
Consider a bipartite quantum state $ \rho $ shared between parties $a$ and $b$, partial coherence relative to measurement is given by \cite{coh_Vs_Uncer, part_cohe2, part_cohe3}
\begin{equation}
  C_H(\rho|M) := \| \sqrt{\rho} - M(\sqrt{\rho}) \|^2\label{h_cohe}
\end{equation}
where $M$ corresponds to the measurement over the state $\rho$. 
This definition of geometric coherence based on Hellinger distance can be related to MIN based on Hellinger distance (H-MIN) which is given by \cite{hmin}
\begin{equation}
 N_H(\rho) =\, ^{\text{max}}_{\Pi^a} \,C_H(\rho|\Pi). \label{h-min} 
\end{equation}
Here, $\Pi=\lbrace\Pi^a_k\rbrace = \lbrace\vert k \rangle \langle k \vert \rbrace$ corresponds to locally invariant von Neumann projective measurements of the subsystem $a$, and $ \Pi^a(\rho) = \sum_k(\Pi_k^a \otimes \mathds{1})\,\rho\,(\Pi_k^a \otimes \mathds{1})$. In the case of eq.(\ref{h_cohe}), measurement need not be locally invariant von Neumann projective measurements but the eq.(\ref{h-min})  is true only when the measurements are locally invariant.
%In partial coherence, $ \Pi$ denotes the L\"{u}ders measurement such that $ \{\Pi^a\} = (\Pi^a \otimes  \mathbb{I} ) \rho (\Pi^a \otimes \mathbb{I})$ with $\Pi^a  = \{\Pi^a_k\}$ being the von Neumann measurement on party $a$.
 Then without loosing generality eq.(\ref{h_cohe}) can be rewritten as 
 \begin{equation}
 C_H(\rho| \Pi)=  1 - \text{tr}\big(\sqrt{\rho}\,\Pi^a(\sqrt{\rho})\big)
\end{equation}
where an equivalent representation is obtained in terms of skew information as 
\begin{equation}
  C_H(\rho|\Pi) =  \sum_i{I}(\rho,\Pi_i^a \otimes \mathds{1})
\end{equation}
where the Wigner-Yanase skew information $ {I}(\rho,K) = -\frac{1}{2}\text{tr}[\sqrt{\rho},K]^2$  \cite{skew_info}.
%%%%%%%%%%%%%%%%%%%%%%%%%%%%%%%%%%%%%%%%%%%%%%%%%%%%%%%%%%%%%%%%%%%%%%%%%%%%%%%%%%%%%%%%%%%%%%%%%%%%%%%%%%
%%%%%%%%%%%%%%%%%%%%%%%%%%%%%%%%%%%%%%%%%%%%%%%%%%%%%%%%%%%%%%%%%%%%%%%%%%%%%%%%%%%%%%%%%%%%%%%%%%%%%%%%%%
\section{Coherence relative to weak measurements}\label{sec3}
  A quantum measurement with any number of outcomes can be constructed as a sequence of two outcomes. Such weak measurements operators are defined as \cite{weak2}
\begin{equation}
  \Omega_x = \tau_1 \Pi^1 + \tau_2 \Pi^2, ~~\Omega_{-x} = \tau_2 \Pi^1 + \tau_1 \Pi^2
\end{equation}\label{weak_op}
where $\tau_1 = \sqrt{\frac{1-\tanh{x}}{2}},~ \tau_2 = \sqrt{\frac{1+\tanh{x}}{2}}$ with $x \in \mathbb{R}$ being the strength of weak measurements and $\Pi^1$ and $\Pi^2$ are the two orthogonal projectors such that $\Pi^1 + \Pi^2 = \mathds{1}$. Here $\Pi^1$ and $ \Pi^2$ can be decomposed as $ \Pi^1 = \sum^k_{i = 1} \Pi_i,\, \Pi^2 = \sum^n_{i = k+1} \Pi_i$, where $\{\Pi_i\}$ are von Neumann measurements \cite{weak_corltn}. Weak measurements obey the relation $\sum_{j = \pm x} \Omega_j \Omega_j^{\dagger} = \mathds{1}$. The post-measured state after a weak measurement on a bipartite state is given by \cite{hmin}
\begin{align}
  \Omega(\rho) &= \sum_{k = \pm x}(\Omega_k \otimes \mathds{1})\,\rho \,(\Omega_k \otimes \mathds{1})\\
               &= \tau\rho + (1-\tau)\Pi^a(\rho)\label{weak1}
\end{align}
where $\tau = 2 \tau_1\tau_2 = \text{sech} x$. In the asymptotic limit $x\rightarrow \infty$, weak measurements operators reduce to orthogonal projectors and has the effect of an identity operator in the limit $ {x \rightarrow 0}$. 
With this we define coherence measure relative to weak measurements for a bipartite state $ \rho$ as
\begin{align}
  C_H(\rho|\Omega) &= \| \sqrt{\rho} - \Omega(\sqrt{\rho}) \|^2. \label{weak_cohrce}                
\end{align}
Alternatively, the above equation can be represented as
\begin{equation}
C_H(\rho|\Omega) = \sum_{k = \pm x} I(\rho,\Omega_k \otimes \mathds{1})
\end{equation}
Maximizing $ C_H(\rho|\Omega)$ gives the MIN based on Hellinger distance quantified by weak measurements as \cite{hmin}
\begin{equation}
  N_w(\rho) = \,^{\text{max}}_{\Omega}C_H (\rho|\Omega).
\end{equation}
                    % THEOREM
\begin{theorem}
  For any bipartite state $\rho$, the coherence relative to weak measurements is given as
  \begin{equation*}
    C_H(\rho|\Omega) = (1-\tau)^2 C_H(\rho|\Pi)
  \end{equation*}
  where $ \Pi $ corresponds to $  \Pi^a(\rho) = \sum_k(\Pi^a_k \otimes \mathds{1})\,\rho\,(\Pi^a_k \otimes \mathds{1})$. 
\end{theorem}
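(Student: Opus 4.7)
\medskip\noindent\textbf{Proof plan.} The key observation is that eq.~(\ref{weak1}), $\Omega(\rho) = \tau\rho + (1-\tau)\Pi^a(\rho)$, was obtained by purely algebraic manipulations of the weak-measurement operators $\Omega_{\pm x}$, and the superoperator $\Omega(\cdot) = \sum_{k=\pm x}(\Omega_k \otimes \mathds{1})(\cdot)(\Omega_k \otimes \mathds{1})$ is linear on the whole space of bounded operators. Consequently the same decomposition extends verbatim from density matrices to arbitrary operators, and in particular can be applied to $\sqrt{\rho}$. Once that is done, $\sqrt{\rho} - \Omega(\sqrt{\rho})$ becomes a scalar multiple of $\sqrt{\rho} - \Pi^a(\sqrt{\rho})$, and the theorem reduces to pulling the scalar out of the Hilbert--Schmidt norm squared.

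Concretely, first I would (re-)establish the operator-level identity $\Omega(A) = \tau A + (1-\tau)\Pi^a(A)$ for a generic $A$. Substituting $\Omega_x = \tau_1\Pi^1 + \tau_2\Pi^2$ and $\Omega_{-x} = \tau_2\Pi^1 + \tau_1\Pi^2$ and summing over $k = \pm x$, the diagonal blocks $(\Pi^i\otimes\mathds{1})A(\Pi^i\otimes\mathds{1})$ pick up coefficient $\tau_1^2+\tau_2^2 = 1$, while the off-diagonal blocks pick up $2\tau_1\tau_2 = \text{sech}\, x = \tau$. Writing $A$ itself as the sum of all four blocks via $\mathds{1} = \Pi^1 + \Pi^2$ and regrouping gives $\Omega(A) = \tau A + (1-\tau)\Pi^a(A)$.

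Applying this identity with $A = \sqrt{\rho}$ then produces
\begin{equation*}
\sqrt{\rho} - \Omega(\sqrt{\rho}) = (1-\tau)\bigl[\sqrt{\rho} - \Pi^a(\sqrt{\rho})\bigr],
\end{equation*}
and taking the Hilbert--Schmidt norm squared factors out $(1-\tau)^2$, giving
$C_H(\rho|\Omega) = (1-\tau)^2 \|\sqrt{\rho} - \Pi^a(\sqrt{\rho})\|^2 = (1-\tau)^2\,C_H(\rho|\Pi)$, as claimed.

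I do not anticipate any real obstacle: the argument is essentially a one-line manipulation once the linearity of $\Omega$ is noted. The only delicate point is making explicit that eq.~(\ref{weak1}) continues to hold when the density matrix is replaced by $\sqrt{\rho}$; this is legitimate precisely because nothing in its derivation uses positivity or unit trace of the argument. As a sanity check, in the strong-measurement limit $x\to\infty$ one has $\tau\to 0$ and recovers $C_H(\rho|\Omega)\to C_H(\rho|\Pi)$, while in the limit $x\to 0$ one has $\tau\to 1$ and the coherence vanishes, reflecting that no measurement has effectively taken place.
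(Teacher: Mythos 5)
Your proposal is correct and follows essentially the same route as the paper: both arguments rest on extending the relation $\Omega(\cdot)=\tau(\cdot)+(1-\tau)\Pi^a(\cdot)$ of eq.(\ref{weak1}) from $\rho$ to $\sqrt{\rho}$ (legitimate by linearity, as you note) and then extracting the factor $(1-\tau)^2$. The only difference is cosmetic: you factor $(1-\tau)$ out of the operator $\sqrt{\rho}-\Omega(\sqrt{\rho})$ before taking the Hilbert--Schmidt norm, whereas the paper expands the trace and collects the coefficients $\tau^2-2\tau+1$ and $2\tau-1-\tau^2$; your version is marginally cleaner and makes explicit the step the paper leaves implicit.
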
 
\begin{proof}
From the definition of coherence relative to weak measurements,
\begin{align} \nonumber
C_H(\rho|\Omega) &= \|\sqrt{\rho} - \Omega(\sqrt{\rho})\|^2 \\ \nonumber
               &= \text{tr}[\sqrt{\rho}-\Omega(\sqrt{\rho})][\sqrt{\rho}-\Omega(\sqrt{\rho})]^\dagger \\ \nonumber
%               &= \text{tr}[\rho - 2\big(t\rho +(1-\tau)\sqrt{\rho}\,\Pi^a(\sqrt{\rho})\big)+\tau^2\rho%+(1-\tau^2)\sqrt{\rho}\,\Pi^a(\sqrt{\rho})]\\ \nonumber
               &= \text{tr}[\rho(\tau^2 - 2\tau +1) + \sqrt{\rho}\Pi^a(\sqrt{\rho})(2\tau -1-\tau^2)]\\ \nonumber
               &= (1- \tau)^2 \text{tr}[\rho - \sqrt{\rho}\,\Pi^a(\sqrt{\rho})]\\ \label{weak_coh1}
               &= (1-\tau)^2 C(\rho|\Pi).
\end{align}   
\end{proof}

It shall be noted that the quantity $ (1-\tau)^2$ can be regarded as the strength ratio of coherence relative to measurement.
To prove the above measure as a faithful quantifiers of coherence we have to show that the measure should satisfy the axioms of a valid coherence measure.
%%%%%%%%%%%%%%%%%%%%%%%%%%%%%%%%%%%%%%%%%%%%%%%%%%%%%%%%%%%%%%%%%%%%%%%%%%%%%%%%%%%%%%%%%%%%%%%%%%%%%%%%%%
\begin{itemize}
  \item[($\mathcal{C}1$)] \textit{Positivity}: $C_H(\rho|\Omega) \geq 0$, with equality holds for $ \rho \in \mathcal{I}_p$ under the condition that, the strength of measurement $ x > 0 $. Here $\mathcal{I}_p$ is the set of  partial incoherent states such that \cite{part_cohe}
  \begin{equation}
    \mathcal{I}_p = \{\rho : \Pi^a(\rho) = \rho\}.
  \end{equation}
  To be called as partially incoherent, a completely positive and trace preserving (CPTP) map $\Phi$ with Kraus operators $A_k$ satisfies $ A_k \mathcal{I}_p A^\dagger_k \in \mathcal{I}_p$. The proof is as follows:
  \begin{proof}
Considering the case if $ \rho \in \mathcal{I}_p$ such that $ \sqrt{\rho} = \Pi^a(\sqrt{\rho})$, then by eq.(\ref{weak1}) we have $ \Omega(\sqrt{\rho}) = \sqrt{\rho}$ hence $C_H(\rho|\Omega)$ vanishes for $ x \geq 0$. The converse is also valid if $C_H(\rho|\Omega) = 0$, then we have $ \sqrt{\rho} = \Omega(\sqrt{\rho})$, such that for $ x \geq 0$,
\begin{equation*}
 \tau \sqrt{\rho} + (1-\tau)\Pi^a(\sqrt{\rho}) = \sqrt{\rho} = \Pi^a(\sqrt{\rho}),
\end{equation*}
then we can say $ \rho \in \mathcal{I}_p$.
\end{proof}
\item[($\mathcal{C}2a$)] \textit{Weak monotonicity}: Monotonicity under partial incoherent completely positive trace preserving maps $C_H(\rho|\Omega) \geq C(\Phi[\rho]|\Omega)$ with $ \Phi[\mathcal{I}_p] \subseteq \mathcal{I}_p$.\\
\begin{proof}
Consider $ d(\rho,\sigma) = \text{tr}(\sqrt{\rho}\,\sqrt{\sigma})$. An incoherent CPTP map is defined as $ \Phi(\rho) = \sum_k A_k \rho A_k^\dagger$ with Kraus operators satisfying $\sum_k A_k^\dagger A_k = \mathds{1}$ and $ A_k \mathcal{I}_p A_k^\dagger \subseteq \mathcal{I}_p$. These maps can be expressed in terms of unitary operator $U$ as 
\begin{equation}
\Phi(\rho) = \text{tr}_E\left(U\left(\rho \otimes\lvert e \rangle \langle e \rvert\right) U^\dagger\right)
\end{equation}
 where the environment is written in the basis $\lvert e \rangle$. With this we have, 
\begin{align*}
  d[\Phi(\rho),\Phi(\sigma)] &= d[\text{tr}_E\left(U (\rho\otimes\lvert e \rangle \langle e \rvert) U^\dagger\right), \text{tr}_E\left(U (\sigma \otimes\lvert e \rangle \langle e \rvert) U^\dagger\right)]\\
  &\geq d\left( \rho \otimes \lvert e \rangle \langle e \rvert , \sigma \otimes \lvert e \rangle \langle e \rvert\right).
\end{align*}
 By the property of trace, we employ the relation $ d (\rho_k, \sigma_k) \geq d(\rho, \sigma)$, with $ \rho_k$ and $ \sigma_k$ are the marginal states of $\rho$ and $\sigma$ respectively.
\begin{equation*}
  d[\Phi(\rho),\Phi(\sigma)] \geq d(\rho,\sigma).
\end{equation*}
With this, we have
\begin{equation*}
 1- \text{tr}[\Phi(\rho),\Phi(\sigma)] \leq 1- \text{tr}(\rho,\sigma)
\end{equation*}
implying that
\begin{equation*}
 C_H(\rho|\Omega) \geq C \left(\Phi(\rho)|\Omega\right).
\end{equation*}
%The quantity $ d(\rho, \Omega) = \text{tr}[\sqrt{\rho}\,\Omega(\sqrt{\rho})]$ is closely related to quantum skew information and affinity. For a CPTP map \cite{skew_prop1, skew_prop2}, 
%\begin{align*}
%  d(\rho, \Omega) &\leq  d\big(\phi(\rho), \Omega\big) \\
 % 1 - d(\rho, \Omega) &\geq  1 - d\big(\phi(\rho), \Omega\big).
%\end{align*}
%From the definition of Cohernce realtive to weak measurements, 
%\begin{equation*}
 % C(\rho|\Omega) = 1- d(\rho, \Omega).
%\end{equation*}
%Hence 
%\begin{equation*}
%  C(\rho|\Omega) \geq C(\Phi[\rho]|\Omega).
%\end{equation*}
\end{proof}
\item[($\mathcal{C}2b$)] \textit{Strong monotonicity}: Monotonocity under selective partial incoherent operation on average, i.e $C_H(\rho|\Omega) \geq \sum_k p_k C_H(\rho_k|\Omega)$ where $ \rho_k = A_k \rho A^\dagger_k / p_k$ with $ p_k $ being $ \text{tr}( A_k \rho A^\dagger_k)$ for a partial incoherent operator $ A_k$ satisfying $ A_k \mathcal{I}_pA^\dagger_k \in \mathcal{I}_p$.
\begin{proof}
Consider $ \text{tr}(\sqrt{\rho}\,\Omega(\sqrt{\rho})) = d(\rho,\Omega)$ and let us recall the Stinespring dilation theorem \cite{stinespring1,stinespring2} which states that $ T: \mathcal{S}(\mathcal{H}) \rightarrow \mathcal{S}(\mathcal{H})$ be a CPTP map on finite dimensional Hilbert space $ \mathcal{H}$, then there exist another Hilbert space $\mathcal{K}$ and unitary operator $U$ on $ \mathcal{H} \otimes \mathcal{K}$ such that
\begin{equation}
  T(\rho) = \text{tr}_\mathcal{K}[U(\rho \otimes \lvert 0 \rangle \langle 0 \rvert)U^\dagger]
\end{equation}
$ \forall \rho \in \mathcal{S}(\mathcal{H})$. $ \mathcal{K}$ is chosen such that $\text{dim} \,\mathcal{K} \leq \text{dim} \,2\mathcal{H}$. The Stinespring representation can be used for the Kraus decomposition as \cite{stinespring1}
\begin{equation*}
 A_k\rho A^\dagger_k = \text{tr}_\mathcal{K}\left[\mathds{1}\otimes \lvert k \rangle \langle k \rvert U(\rho \otimes \lvert 0 \rangle \langle 0 \rvert)U^\dagger \mathds{1}\otimes \lvert k \rangle \langle k \rvert\right],
\end{equation*}
where $ \sum_k A_k \rho A^\dagger_k $ be a CPTP map and an ancillary state $ \lvert 0 \rangle \langle 0 \rvert \in \mathcal{K}$. Then we have \\
$
     \sum_k d \bigg(A_k\rho A^{\dagger}_k,A_k\sigma A^{\dagger}_k\bigg) =
      \sum_k d \bigg(\text{tr}_\mathcal{K}\big[ \mathds{1}\otimes\lvert k \rangle \langle k \rvert U(\rho \otimes\lvert 0 \rangle \langle 0 \rvert)U^{\dagger}\mathds{1}\otimes\lvert k \rangle \langle k \rvert \big],
$

$ \hspace{5cm}     
 \text{tr}_\mathcal{K}\big[\mathds{1}\otimes\lvert k \rangle \langle k \rvert U(\sigma \otimes\lvert 0 \rangle \langle 0 \rvert)U^{\dagger}\mathds{1}\otimes\lvert k \rangle \langle k \rvert \big]\bigg),
   $
\\where $\sigma = \Omega(\sqrt{\rho}). $ By using the monotonicity of $d(\rho,\sigma)$ and by the property $ d(\sum_i \Pi \rho \Pi_i \sum_i \Pi_i \sigma \Pi_i ) = \sum_i d(\Pi_i \rho \Pi, \Pi_i \sigma \Pi)$ for $ \{\Pi_i\}$ being set of orthogonal projectors, we have \\
$
  \sum_k d(A_k \rho A^\dagger_k, A_k \sigma A^\dagger_k) \geq \sum_k d \bigg( \mathds{1}\otimes\lvert k \rangle \langle k \rvert U(\rho \otimes\lvert 0 \rangle \langle 0 \rvert)U^{\dagger}\mathds{1}\otimes\lvert k \rangle \langle k \rvert , 
$

$ \hspace{5cm}  
   \mathds{1}\otimes\lvert k \rangle \langle k \rvert U(\sigma \otimes\lvert 0 \rangle \langle 0 \rvert)U^{\dagger}\mathds{1}\otimes\lvert k \rangle \langle k \rvert \bigg)
   $
   
   $ \hspace{4cm}
 \geq d(U \rho \otimes \rvert 0\rangle \langle 0\rvert U^{\dagger},U \sigma \otimes\rvert 0\rangle \langle 0\rvert U^{\dagger})$.
  \\We impose the unitary invariance and trace property, and thus 
  \begin{equation*}
   d(A_k \rho A^{\dagger}_k, A_k \Omega A^{\dagger}_k) \geq d (\rho,\Omega),
  \end{equation*}
with which we have 
   \begin{equation*}
     \sum_k p_k C_H(\rho_k|\Omega) \leq C_H(\rho|\Omega).
   \end{equation*}
   \end{proof}
\item[($\mathcal{C}3$)] \textit{Convexity}: The coherence measure is non-increasing under mixing of quantum states such that $ C_H(\sum_ip_i\rho_i|\Omega) \leq \sum_i p_i C_H(\rho_i|\Omega)$ for any set of sates $\{\rho_i\}$ with $ \sum_i p_i = 1$.
\begin{proof}
The quantity $ d (\rho,\Omega) = \text{tr}\left(\sqrt{\rho}\,\Omega(\sqrt{\rho})\right)$ can be considered as the affinity which is concave in nature such that \cite{part_cohe},
\begin{equation*}
  d\left(\sum_k p_k \rho_k|\Omega\right) \geq \sum_k p_k d(\rho_k|\Omega).
\end{equation*}
Since $C_H(\rho|\Omega) = 1- d(\rho,\Omega)$ implying
\begin{equation*}
   C_H\left(\sum_i p_i\rho_i|\Omega\right) \leq \sum_i p_i C_H(\rho_i|\Omega).
\end{equation*}
\end{proof}
\item[($\mathcal{C}4$)] In the $ \lim_{x \rightarrow 0} C_H(\rho|\Omega) = 0$ and $ \lim_{x \rightarrow \infty} C_H(\rho|\Omega) = C_H(\rho|\Pi)$ we have
\begin{equation*}
0 \leq C_H(\rho|\Omega) \leq C_H(\rho|\Pi).
\end{equation*}
\end{itemize} 
In light of the properties $ \mathcal{C}1$ to $ \mathcal{C}3$, we claim that the proposed quantity is a faithful measure of quantum coherence relative to weak measurements.
%%%%%%%%%%%%%%%%%%%%%%%%%%%%%%%%%%
\begin{theorem}
  For a non-zero measurement strength $ x >0$, $C_H(\rho|\Omega) = C_H(\rho|\Pi) = 0$ iff $ \rho$ is a product state.
\end{theorem}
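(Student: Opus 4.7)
The plan is to reduce the claim to a statement about the projective quantity $C_H(\rho|\Pi)$ using Theorem 1 and then analyze when the latter vanishes. First I would note that for any strictly positive strength $x>0$ we have $\tau=\text{sech}\,x\in(0,1)$, so $(1-\tau)^2>0$. Since Theorem 1 gives $C_H(\rho|\Omega)=(1-\tau)^2\,C_H(\rho|\Pi)$, the two quantities vanish simultaneously, and the theorem reduces to proving that $C_H(\rho|\Pi)=0$ (for the locally invariant measurement, i.e.\ the one realizing the $H$-MIN) is equivalent to $\rho$ being a product state.

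Next I would rewrite $C_H(\rho|\Pi)=\sum_i I(\rho,\Pi_i^a\otimes\mathds{1})=-\tfrac{1}{2}\sum_i\text{tr}[\sqrt{\rho},\Pi_i^a\otimes\mathds{1}]^2$. Each summand is the squared Hilbert–Schmidt norm of an anti-Hermitian operator, hence nonnegative, so the sum is zero iff $[\sqrt{\rho},\,\Pi_i^a\otimes\mathds{1}]=0$ for every $i$. Thus the vanishing condition becomes a simultaneous commutation requirement between $\sqrt{\rho}$ and all the rank-one projectors of a locally invariant measurement on $a$.

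The forward direction is then immediate: if $\rho=\rho_a\otimes\rho_b$, one picks $\{\Pi_i^a\}$ to be the spectral projectors of $\rho_a$, which is locally invariant because $\sum_i\Pi_i^a\rho_a\Pi_i^a=\rho_a$. Since $\sqrt{\rho}=\sqrt{\rho_a}\otimes\sqrt{\rho_b}$ and each $\Pi_i^a$ commutes with $\sqrt{\rho_a}$, every commutator above vanishes, giving $C_H(\rho|\Pi)=0$ and hence $C_H(\rho|\Omega)=0$ by Theorem 1.

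The main obstacle is the converse: showing that vanishing of the maximized $C_H(\rho|\Pi)$ forces $\rho$ to factorize. My plan here is to exploit the freedom in choosing the locally invariant measurement. If the optimum vanishes then for the optimal basis one has $\Pi^a(\sqrt{\rho})=\sqrt{\rho}$, which expanded in the eigenbasis $\{|k\rangle\}$ of $\rho_a$ forces $\sqrt{\rho}=\sum_k|k\rangle\langle k|\otimes S_k$ for some positive operators $S_k$ on $b$. I would then argue, either by varying the measurement within degenerate eigenspaces of $\rho_a$ or by using the characterization of $H$-MIN established in Ref.\ \cite{hmin}, that the operators $S_k$ must all be proportional to a single positive operator $\sqrt{\rho_b}$; squaring then yields $\rho=\rho_a\otimes\rho_b$. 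Combining the two directions with the reduction from the first paragraph completes the proof.
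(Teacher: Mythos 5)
Your reduction via Theorem 1 and the skew-information/commutator reformulation is sound, and your forward direction is complete and essentially matches the paper's (which checks $\Omega(\sqrt{\rho})=\sqrt{\rho}$ directly from the spectral decomposition of a product $\sqrt{\rho}$). The genuine gap is exactly where you flag it: the converse. From $\Pi^a(\sqrt{\rho})=\sqrt{\rho}$ you correctly obtain only the block form $\sqrt{\rho}=\sum_k\vert k\rangle\langle k\vert\otimes S_k$, and the promotion of this to $S_k\propto\sqrt{\rho_b}$ is left as a plan. Neither of your two suggested routes closes it. Varying the basis inside degenerate eigenspaces of $\rho_a$ gives nothing when $\rho_a$ is nondegenerate, since then the eigenbasis of $\rho_a$ is the unique locally invariant von Neumann measurement. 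And the characterization of vanishing H-MIN in the cited reference delivers precisely the classical--quantum form $\rho=\sum_k p_k\vert k\rangle\langle k\vert\otimes\rho_k$, not a product state: such a state with distinct $\rho_k$ and nondegenerate $p_k$ has $C_H(\rho|\Pi)=0$ for the unique locally invariant measurement yet does not factorize, so the implication you need is not merely unproven but false under the ``optimal locally invariant $\Pi$'' reading of the statement.

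For context, the paper's own proof of the converse has the same hole: it asserts that $\Pi^a(\sqrt{\rho})=\sqrt{\rho}$ holds iff $\sqrt{\rho}=\sum_i p_i\Pi_i^a\otimes\Pi_i^b=\rho^a\otimes\rho^b$, silently identifying a classically correlated decomposition with a product state. Your write-up is therefore more careful than the source in isolating the missing step, but as a proof of the theorem as stated it is incomplete, and completing it would require either excluding classical--quantum states by hypothesis or reinterpreting the premise as $C_H(\rho|\Pi)=0$ for every projective measurement on $a$ (which by the commutant argument forces $\rho=\mathds{1}/d_a\otimes\rho^b$, again not the stated conclusion).
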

\begin{proof}
For $\sqrt{\rho} = \rho^a \otimes \rho^b$ being a product state, there exist a spectral decomposition of the form $\sqrt{\rho} = \sum_i p_i \Pi_i^a \otimes \Pi_i^b$.
 From the definition we have $ C_H(\rho|\Omega) = 1 - \text{tr}[\sqrt{\rho}\,\Omega(\sqrt{\rho})]$. Further
\begin{align*}
  \Omega(\sqrt{\rho}) &= \tau  \sum_i p_i \Pi_i^a \otimes \Pi_i^b+ (1-\tau)\Pi^a\left( \sum_i p_i \Pi_i^a \otimes \Pi_i^b\right) \\                    
                      &= \sqrt{\rho}
\end{align*}
%Such that $ \Omega(\sqrt{\rho}) = \Pi^a(\sqrt{\rho}) = \sqrt{\rho}$ 
implying $ C_H(\rho|\Omega) = C_H(\rho|\Pi) = 1 - \text{tr}(\rho) = 0$. 
The converse is also true assuming that
\begin{equation*}
 C_H(\rho|\Omega) = C_H(\rho|\Pi) = 0.
\end{equation*}
Then, $ \text{tr}(\sqrt{\rho}\,\Omega(\sqrt{\rho})) =\text{tr}(\sqrt{\rho}\,\Pi^a(\sqrt{\rho}))  = 1$ implying
$ \Omega(\sqrt{\rho}) = \Pi^a(\sqrt{\rho}) = \sqrt{\rho}$. This happens iff
\begin{equation*}
\sqrt{\rho} = \sum_i p_i \Pi_i^a \otimes \Pi_i^b = \rho^a \otimes \rho^b.
\end{equation*}
\end{proof} 
The coherence relative to weak measurements, as defined in eq.(\ref{weak_coh1}), is a monotonically increasing function with respect to the strength of weak measurement, denoted by $x$. Specifically, we have $C_H(\rho|\Omega) \leq C_H(\rho|\Pi)$, and equality holds as $x \rightarrow \infty$.

In the following analysis, we explore the calculation of coherence relative to weak measurements for well-known families of quantum states, including Bell diagonal states and Werner states.
\subsection*{Bell diagonal state}
The Bloch vector representation of Bell diagonal state is given as 
\begin{equation}
  \rho^{BD} = \frac{1}{4}\Big(\mathbb{I}\otimes \mathbb{I} + \sum^{3}_{i = 1} c_i(\sigma_i \otimes \sigma_i)\Big) \label{bell_diag}
\end{equation}
where $\textbf{c} = (c_1, c_2, c_3) $ is the correlation vector with coefficients with $ -1 \leq c_i \leq 1$. Then
\begin{equation}
  \sqrt{\rho^{BD}} =  \frac{1}{4}\Big(\delta~\mathbb{I}\otimes \mathbb{I} + \sum^{3}_{i = 1} d_i(\sigma_i \otimes \sigma_i)\Big)
\end{equation}
where $\delta = \text{tr}(\sqrt{\rho^{BD}}) = \sum_i \sqrt{\lambda_i}$ and
\begin{equation*}
  d_1 = \sqrt{\lambda_1}- \sqrt{\lambda_2} + \sqrt{\lambda_3} - \sqrt{\lambda_4} 
\end{equation*}
\begin{equation*}
d_2 =  -\sqrt{\lambda_1}+ \sqrt{\lambda_2} + \sqrt{\lambda_3} - \sqrt{\lambda_4} 
\end{equation*}
\begin{equation*}
d_3 = \sqrt{\lambda_1} + \sqrt{\lambda_2} - \sqrt{\lambda_3} - \sqrt{\lambda_4}. 
\end{equation*}
Here $\lambda_i$ are the eigenvalues of the Bell diagonal state. 
\subsection*{Werner state}
Werner state with $d\times d$ dimension can be represented as \cite{werner}
\begin{equation}
  \rho^{w} = \frac{d-y}{d^3-d}\mathbb{I} + \frac{yd-1}{d^3-d}\sum_{\alpha \beta}\lvert \alpha \rangle \langle\beta \rvert  \otimes\lvert \beta \rangle \langle \alpha  \rvert
\end{equation}
where $\sum_{\alpha \beta}\lvert \alpha \rangle \langle\beta \rvert  \otimes\lvert \beta \rangle \langle \alpha  \rvert$ is flip operator with $y \in [-1,1]$.
\begin{figure}[!ht]
 \includegraphics[width=0.45\linewidth]{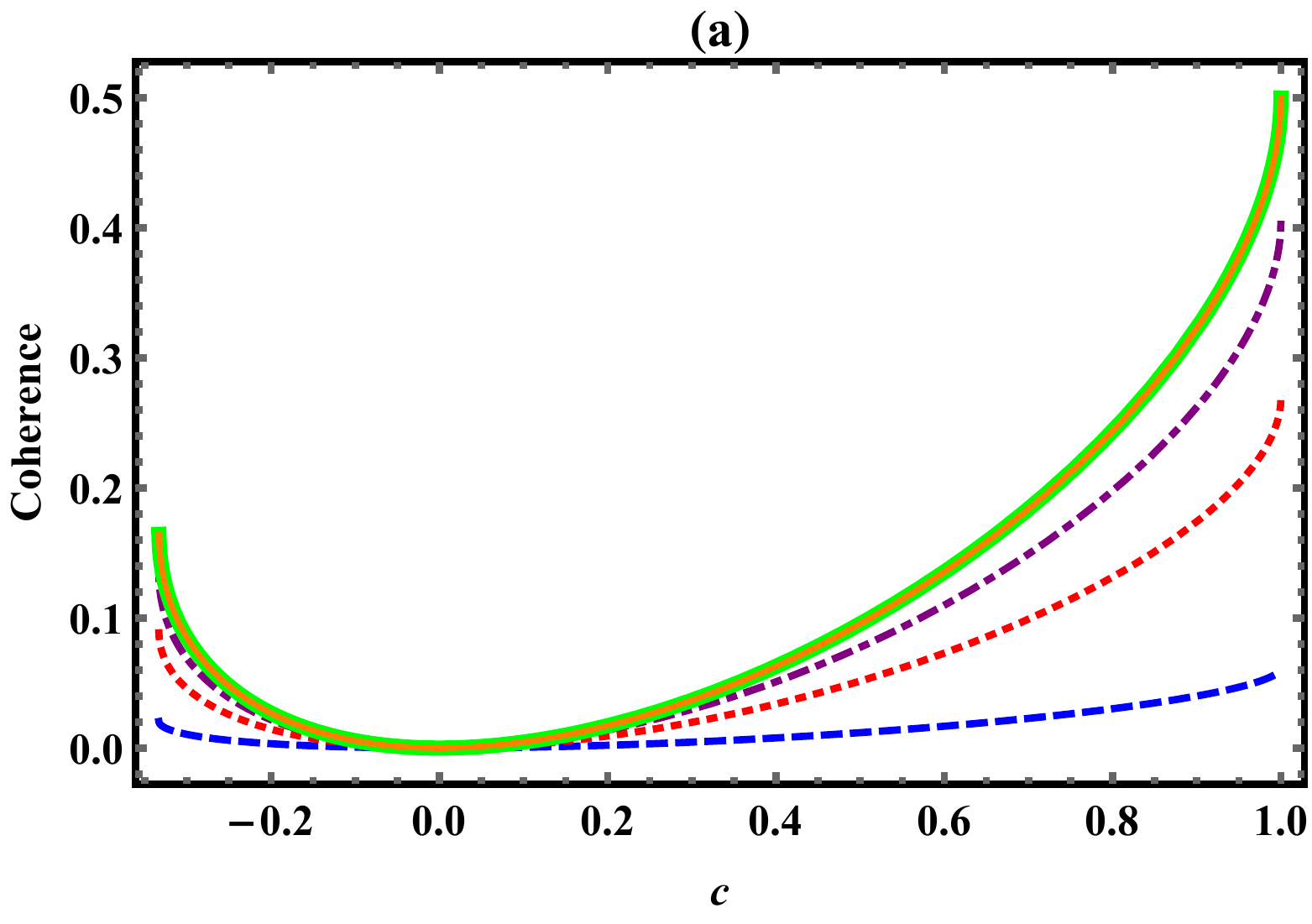}
 \includegraphics[width=0.45\linewidth]{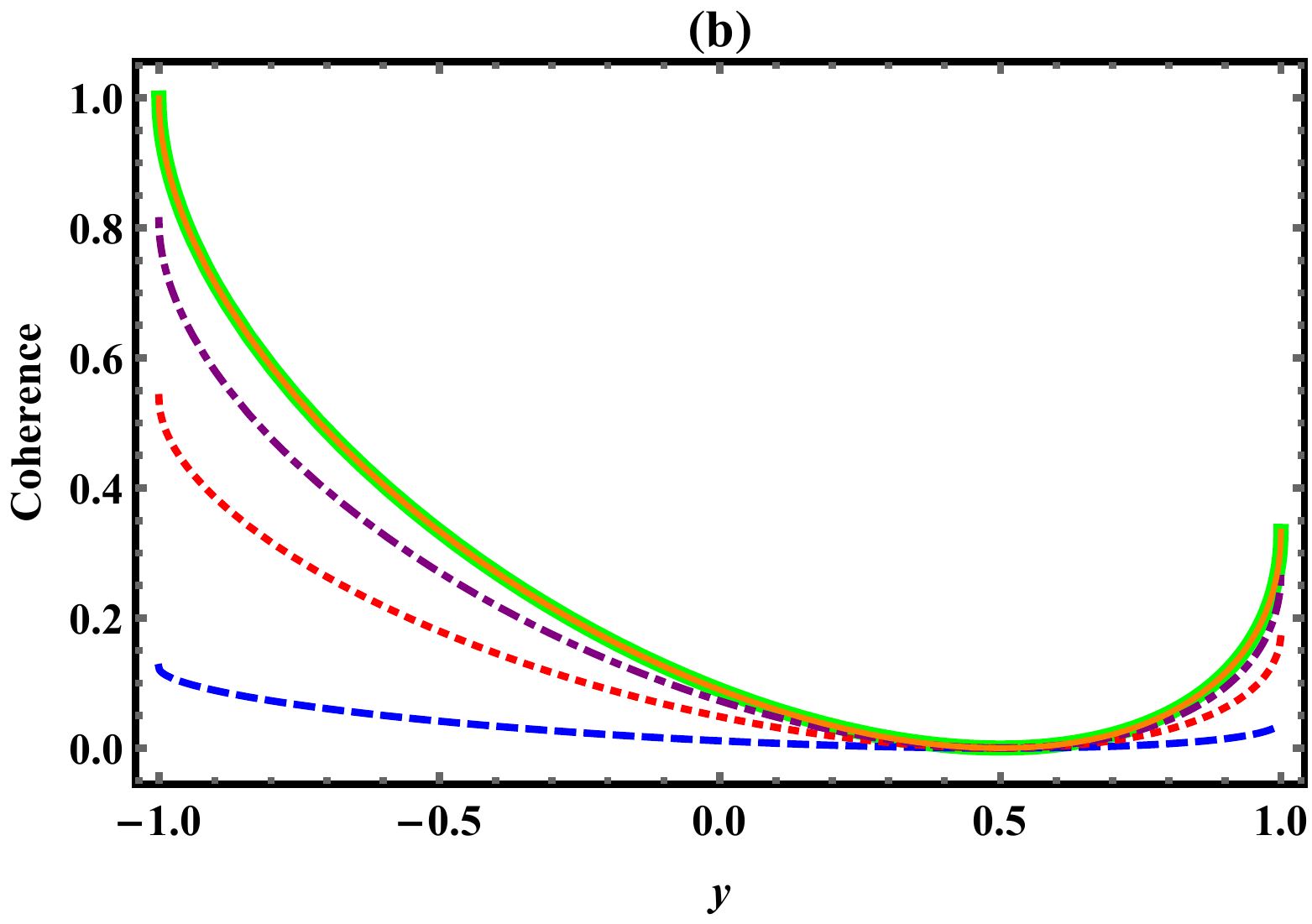}\\
\caption{Coherence quantified by weak measurement for strength $ x=1$ (dashed), $ x = 2$ (Dotted), $ x = 3$ (Dot-dashed), $ x = 50$ (Sold-thin), von Neumann measurement (Solid-thick),   for (a) Bell states with $ c_1 = c_2 = c_3 = -c$ (b) Werner states.}
\label{cohe_examp} 
\end{figure}

In Fig.[\ref{cohe_examp}], we analyze the impact of measurement strengths, denoted by $x$, on coherence for Bell and Werner states. Notably, we observe that as the measurement strength increases, weak coherence approaches coherence relative to von Neumann measurement. Intriguingly, the coherence relative to von Neumann measurement converges to unity as the measurement strength becomes large ($x = 50$). This outcome is consistent with the understanding that weak measurement transitions to von Neumann measurement in the asymptotic limit ($x \rightarrow \infty$).
\section{Quantum correlation via coherence}\label{sec5}
Consider a bipartite state shared between parties $a$ and $b$, and we define a correlation function
\begin{equation}
  \Delta_w(\rho|\Omega) = C_H(\rho|\Omega) - C_H(\rho^a\otimes\rho^b|\Omega) \label{delta}
\end{equation}
as the difference between global state coherence and its coherence of marginal product state. The coherence relative to measurements is represented in terms of skew information and 
\begin{align*}
  C_H(\rho^a\otimes\rho^b|\Omega) &=  \sum_k{I}(\rho^a \otimes \rho^b,\Omega_k \otimes \mathds{1}) \\
                                &=  \sum_k{I}(\rho^a ,\Omega_k )\\
                                &= C_H(\rho^a|\Omega).
\end{align*}
Hence eq.(\ref{delta}) can be modified as $ \Delta_w(\rho|\Omega) = C_H(\rho|\Omega) - C_H(\rho^a|\Omega)$. Defining quantum correlation induced by coherence measure as
\begin{equation}
 \mathcal{Q}_w(\rho) =\, ^\text{min}_\Omega \Delta_w(\rho|\Omega)
\end{equation}
 where $ \mathcal{Q}_w$ refers to quantum correlation due to weak measurements.
% Since the optimization is taken over weak operator turns out to be the optimization over von Neumann projective measurement, we can write the above form as
 %\begin{equation}
 %  \mathcal{Q}_w(\rho) = \,^\text{min}_{\Omega} \Delta_w(\rho|\Omega). \label{weak_cor}
% \end{equation}
This quantity possesses the following properties and satisfy the axioms of a faithful quantifier of quantum correlation.
 %%%%%%%%%%%%%%%%%%%%%%%%%%
 \begin{itemize}
 \item[(i)] $ \mathcal{Q}_w(\rho) \geq 0$, with equality holds for a classically correlated state i.e., $ \rho = \Pi^a(\rho)$ with the measurement strength $ x > 0$.
%\begin{proof}
%It follows that from the property of skew information $  \mathcal{I}(\rho, K \otimes \mathbb{I}) \geq  \mathcal{I}(\rho^a , K^a)$ \cite{skew_prop}, we have
%\begin{equation*}
 %C(\rho|\Omega) \geq C(\rho^a|\Omega^a).
%\end{equation*}
%Hence $ \Delta_w(\rho|\Omega) \geq 0 $.
%Considering the equality of the above result as, if $ \sqrt{\rho} = \Pi^a(\sqrt{\rho})$, then by eq.(\ref{weak1}) $ \sqrt{\rho} = \Omega(\sqrt{\rho}) = \Pi^a(\sqrt{\rho})$ such that
%\begin{equation*} 
%  C(\rho|\Omega) = C(\rho^a|\Omega^a) = 0.
%\end{equation*}
%\end{proof}\\
\begin{proof}
By the definition
\begin{align*}
\mathcal{Q}_w(\rho) &=\,^{\text{min}}_{\Omega} \Delta_w(\rho|\Omega)\\
                    &=\,^{\text{min}}_{\Omega}\left( C_H(\rho|\Omega) - C_H(\rho^a|\Omega)\right).
\end{align*}
This can be written as
\begin{equation*}
 \left(1- \text{tr}[\sqrt{\rho}\,\Omega(\sqrt{\rho})]\right) - \left(1-\text{tr}[\sqrt{\rho^a}\,\Omega(\sqrt{\rho})] \right) \geq 0.
\end{equation*}
Since $ \text{tr}[\sqrt{\rho^a}\,\Omega(\sqrt{\rho^a})] \geq \text{tr}[\sqrt{\rho}\Omega(\sqrt{\rho})]$ \cite{skew_prop}, we have the proof.
\end{proof}
\item [(ii)] $\mathcal{Q}_w(\rho)$ is non-increasing under CPTP map $ \Phi$ as $ \mathcal{Q}_w\left([\mathds{1}\otimes \Phi]\rho\right) \leq \mathcal{Q}_w(\rho)$
\begin{proof}
For any CPTP map $ \Phi$ acting on the subsystem $b$, we are able to choose an ancillary state $\rho^c$ and an unitary operator $U$ such that $ \Phi(\rho^b) = \text{tr}_c \left(U(\rho^b\otimes\rho^c)U^\dagger\right)$. Here, 
\begin{align*}
  \Delta_w\left((\mathds{1}\otimes \Phi)\rho|\Omega\right) &= C_H\left((\mathds{1}\otimes\Phi)\rho|\Omega\right)-C_H(\rho^a|\Omega) \\
  &= \sum_k {I}\left((\mathds{1}\otimes \Phi)\rho,\Pi_k^a \otimes \mathds{1}\right) - {I}(\rho^a,\Pi_k^a).
\end{align*}
The above form can be rewritten using the ancilla as
\begin{align*}
  {I}\big((\mathds{1}\otimes \Phi)\rho,\Omega\otimes\mathds{1}\big) =& {I}\big(\text{tr}[(\mathds{1}\otimes U)(\rho\otimes\rho^c) (\mathds{1}\otimes U)^\dagger], \Omega \otimes \mathds{1} \big)\\
  &\leq {I}\big((\mathds{1}\otimes U)(\rho\otimes\rho^c)(\mathds{1}\otimes U)^\dagger,\Omega \otimes\mathds{1}\otimes\mathds{1}^c\big)\\
  &= {I}\big(\rho\otimes\rho^c,\Omega \otimes\mathds{1}\otimes\mathds{1}^c \big)\\
  &= {I}\big(\rho,\Omega \otimes\mathds{1} \big).
\end{align*}
From this the coherence based on skew information can be written as 
\begin{align*}
  \Delta_w\big((\mathds{1}\otimes\Phi)\rho|\Omega\big) \leq \Delta_w(\rho|\Omega).
\end{align*}
From this it is clear that $ C\big((\mathds{1}\otimes\Phi)\rho|\Omega\big) \leq C_H(\rho|\Omega)$ and hence
\begin{equation*}
  \mathcal{Q}_w\left([\mathds{1}\otimes \Phi]\rho\right) \leq \mathcal{Q}_w(\rho).
\end{equation*}
\end{proof}
\item[(iii)] $ \mathcal{Q}_w(\rho)$ is invariant under local unitary operation such that  $ \mathcal{Q}_w(U\rho U^\dagger) = \mathcal{Q}_w(\rho)$ where $ U = U^a \otimes U^b$ with $U^{a,b}$ being the unitary operation on the marginal state $a,\,(b)$. 
\begin{proof}
Consider $ d(\rho,\Omega) = \text{tr}\big(\sqrt{\rho}\,\Omega(\sqrt{\rho})\big)$ such that the unitary Inverclyde of $d(\rho,\Omega)$ follows,
\begin{equation}
\text{tr}[U_{{\rho}} \Omega(U_\rho)] = \text{tr}[U_{\rho} \big(\tau U_\rho +(1-\tau)\{\Pi^a\otimes\mathds{1} (U_\rho)\Pi^a\otimes\mathds{1}\}\big)]
\end{equation}
where $ U_\rho =U\sqrt{\rho}U^\dagger $.

%$
%  \text{tr}[U\sqrt{\rho}U^\dagger \Omega(U\sqrt{\rho}U^\dagger)] =\\
%  $
%  $ 
%  \text{tr}[U\sqrt{\rho}U^\dagger \big(\tau U\sqrt{\rho}U^\dagger +(1-\tau)\{\Pi^a\otimes\mathds{1} (U%%\sqrt{\rho}U^\dagger)\Pi^a\otimes\mathds{1}\}\big)].
%$

Making use of trace property we have,
\begin{align}\nonumber
 \text{tr}[\tau\rho + (1-\tau)\left(U_{\Pi^a}\sqrt{\rho} U_{\Pi^a } \sqrt{\rho}\right)] &= \text{tr}[\tau\rho + (1-\tau)\big(\Pi^a(\sqrt{\rho})\sqrt{\rho}\big)] \\
 &=\text{tr}[\sqrt{\rho}\,\Omega(\sqrt{\rho})]
\end{align}
where $U_{\Pi^a}= U^\dagger \Pi^a U \otimes \mathds{1}$.
%$
 % \text{tr}[\tau\rho + (1-\tau)\{U^\dagger \Pi^a U \otimes \mathds{1}\sqrt{\rho} U^\dagger \Pi^a U \otimes \mathds{1}\sqrt{\rho}\}]\\
%   =\text{tr}[\tau\rho + (1-\tau)\{\Pi^a(\sqrt{\rho})\sqrt{\rho}\}] \\
%                         = \text{tr}[\sqrt{\rho}\,\Omega(\sqrt{\rho})].
%$
It follows that, since the quantity $\text{tr}[\sqrt{\rho}\,\Omega(\sqrt{\rho})] $ is unitary invariant, $ \mathcal{Q}_w(\rho)$ is invariant under unitary transformation.
\end{proof}
\end{itemize}
 %%%%%%%%%%%%%%%%%%%%%%%%%%%%%%%%%%%%%%%%%%%%%%%%%%%%%%%%%%%%%%%%%%%%%%%%%%%%%%%%%%%%%%%%%%%%%%%%%%%%%%%%%%%%%%%%%%%%%%%%%%%%%%%%%%%%%%%%%%%%%%%%%%%%%%%%%%%%%%%%%%%%%%%%%%%%%%%%%%%%%%%%%%%%%%%%%%%%%%%%%%%%%%%%%%%%%%%%%
% UNCERTAINITY 
\section{Quantum uncertainty under weak measurement}\label{sec6}
While measurement serves as a crucial factor in determining physical reality, uncertainty arises from the inherent limitations of obtaining precise information in the quantum realm. Specifically, the total uncertainty of an observable $\mathcal{O}$ in a given state $\rho$ can be quantified using the framework proposed in \cite{total_uncertainty, coh_Vs_Uncer}.
\begin{equation}
  \mathcal{V}(\rho,\mathcal{O}) = C(\rho,\mathcal{O}) + Q(\rho,\mathcal{O})\label{tot_uncer}
\end{equation} 
where $ \mathcal{V}(\rho,\mathcal{O})$ denotes the variance of an observable with respect to the state which consists of classical ($ C(\rho,\mathcal{O})$) and quantum mechanical $(Q(\rho,\mathcal{O})) $ counterparts of uncertainties.
The above equation is also valid for an operator $\mathcal{A}$ \cite{operator_varince} and for a Hermitian operator $\mathcal{A}$, the total uncertainty of the operator in the state $\rho$ is given by
\begin{equation}
\mathcal{V}(\rho,\mathcal{A}) = \text{tr}(\rho\,\mathcal{A}^2_0) \label{varience}
\end{equation}
where $ \mathcal{A}_0 = \mathcal{A}-\text{tr}(\rho\,\mathcal{A})$.
If $ \rho$ is a pure state, the contribution of classical uncertainty vanishes. For any quantum measurement represented by a POVM $ M = \{ M_k; k = 0, 1, \cdots, m\}$, the quantum uncertainty of $M$ with respect to the state $\rho$ is given by \cite{coh_Vs_Uncer}
\begin{equation}
  Q(\rho,M) = \sum_i {I}(\rho,M_i)
\end{equation}
where the measurement $M$ plays an active role. The quantity mentioned above can be considered as bonafide measure of coherence when the state $\rho$ plays an active role.

 For a bipartite state $\rho$, the quantum uncertainty of the weak measurement operator is given by
\begin{equation}
  Q(\rho,\Omega \otimes \mathds{1}) = \sum_i{I}(\rho,\Omega_i \otimes \mathds{1}) \label{qua_uncer}.
\end{equation}
Here the weak measurement has an active role to play. There exists a dual
interpretation of the quantity $Q(\rho, M)$ such that, it is considered as coherence in the quantum state point of view and uncertainty when measurement is considered \cite{coh_Vs_Uncer}. The defined uncertainty depends on the strength of the measurement such that, one could define the least quantum uncertainty possible. From the definition of skew information, $ Q(\rho, \Omega \otimes \mathds{1}) \geq 0$ with equality holds for $ [\rho, \Omega \otimes \mathds{1}] = 0$. Also, it holds 
\begin{equation*}
  Q(\rho,\Omega \otimes \mathds{1}) \geq Q(\rho^a, \Omega)
\end{equation*}
which shows that coherence decrease under partial trace operation \cite{skew_prop}. The classical part of total uncertainty of weak measurement in the state $\rho$ is given by 
\begin{equation}
{C}(\rho, \Omega) =\mathcal{V}(\rho, \Omega)- I(\rho,\Omega).
\end{equation}
Then from eq.(\ref{varience}), we have
\begin{align}
{C}(\rho, \Omega) &= \text{tr}(\rho\,\Omega^2)- \text{tr}(\rho\,\Omega)^2-\text{tr}(\Omega \rho \Omega) +\text{tr}(\sqrt{\rho}\Omega \sqrt{\rho} \Omega) \nonumber\\
                  &= \text{tr}(\sqrt{\rho}\,\Omega \sqrt{\rho} \,\Omega)  -\text{tr}(\rho\,\Omega)^2\nonumber\\
                  &= \text{tr}(\sqrt{\rho}\,\Omega \sqrt{\rho} \,\Omega) - \text{tr}(\Omega(\rho^2)).
\end{align}
%Using the property of trace, we have
%\begin{equation*}
%\text{tr}(\rho\,\Omega)^2 = \text{tr}[\sqrt{\rho}\,(\rho\,\Omega(\sqrt{\rho}))].
%\end{equation*}
%where $ \Omega(\sqrt{\rho}) = (\Omega^a \otimes \mathds{1}) \sqrt{\rho} \,(\Omega^a \otimes \mathds{1})$.
% With this, we have the classical uncertainty of weak measurement in the state $ \rho$ as
%\begin{equation}
%{C}(\rho, \Omega) = \text{tr}[\sqrt{\rho}\,\Omega(\sqrt{\rho})\left(\mathds{1} - \rho\right) ].
%\end{equation}
\begin{theorem}
For any two quantum states $\rho$ and $\sigma$, we have
\begin{equation}
I(\rho,\Omega)I(\sigma, \Omega) \geq 4 \frac{|\text{tr}[\rho, \sigma]\Omega|^4}{(H^2 - 4 H )^2}
\end{equation}
where $H = \lVert \sqrt{\rho} - \sqrt{\sigma}\rVert^2$, the Hellinger distance between the states $\rho$ and $\sigma$.
\end{theorem}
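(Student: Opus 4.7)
The plan is to derive the bound through two symmetric applications of Cauchy--Schwarz in the Hilbert--Schmidt inner product $\langle A,B\rangle = \text{tr}(A^{\dagger}B)$, exploiting the representation of the Wigner--Yanase skew information as a squared commutator norm.

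First I would record $I(\rho,\Omega)=\tfrac{1}{2}\lVert[\sqrt{\rho},\Omega]\rVert^{2}$ and $I(\sigma,\Omega)=\tfrac{1}{2}\lVert[\sqrt{\sigma},\Omega]\rVert^{2}$. To bring $[\rho,\sigma]$ into contact with these skew informations, I would use the Leibniz expansion $[\rho,\sigma]=[\sqrt{\rho}^{\,2},\sqrt{\sigma}^{\,2}]$ together with trace cyclicity to get the two dual identities
\begin{equation*}
\text{tr}([\rho,\sigma]\,\Omega) \;=\; \text{tr}\bigl(\{\rho,\sqrt{\sigma}\}\,[\sqrt{\sigma},\Omega]\bigr) \;=\; -\,\text{tr}\bigl(\{\sigma,\sqrt{\rho}\}\,[\sqrt{\rho},\Omega]\bigr).
\end{equation*}
Cauchy--Schwarz applied to each representation gives $|\text{tr}([\rho,\sigma]\,\Omega)|^{2}\le 2\,I(\sigma,\Omega)\,\text{tr}(\{\rho,\sqrt{\sigma}\}^{2})$ and the mirror bound with $\rho\leftrightarrow\sigma$; multiplying them yields
\begin{equation*}
|\text{tr}([\rho,\sigma]\,\Omega)|^{4} \;\le\; 4\,I(\rho,\Omega)\,I(\sigma,\Omega)\,\mathcal{R},
\end{equation*}
where $\mathcal{R}=\text{tr}(\{\rho,\sqrt{\sigma}\}^{2})\,\text{tr}(\{\sigma,\sqrt{\rho}\}^{2})$ depends only on the two states.

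The remaining task is to estimate $\mathcal{R}$ by the Hellinger distance. The motivating identity is that, writing $P=\sqrt{\rho}+\sqrt{\sigma}$ and $Q=\sqrt{\rho}-\sqrt{\sigma}$ for states of unit trace,
\begin{equation*}
(H^{2}-4H)^{2} \;=\; H^{2}(4-H)^{2} \;=\; \lVert Q\rVert^{4}\,\lVert P\rVert^{4},
\end{equation*}
so I would re-express the anticommutators $\{\rho,\sqrt{\sigma}\}$ and $\{\sigma,\sqrt{\rho}\}$ in the $P,Q$ basis via $\sqrt{\rho}=\tfrac{1}{2}(P+Q)$, $\sqrt{\sigma}=\tfrac{1}{2}(P-Q)$, and extract an estimate of the form $\mathcal{R}\le \lVert Q\rVert^{4}\lVert P\rVert^{4}/16$. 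Inserting this into the previous display yields the stated theorem.

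The principal obstacle is this final estimate. A direct Cauchy--Schwarz bound on $\text{tr}(\{\rho,\sqrt{\sigma}\}^{2})$ typically overshoots the Hellinger product, so one must genuinely exploit the algebraic cancellations between $P=\sqrt{\rho}+\sqrt{\sigma}$ and $Q=\sqrt{\rho}-\sqrt{\sigma}$ -- plausibly through a sharper operator inequality applied after substitution -- to recover precisely the factor $(H^{2}-4H)^{2}$ in the denominator rather than a coarser trace expression.
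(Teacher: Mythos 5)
There is a genuine gap, and it sits exactly where you flag it: the estimate $\mathcal{R}=\text{tr}(\{\rho,\sqrt{\sigma}\}^{2})\,\text{tr}(\{\sigma,\sqrt{\rho}\}^{2})\le \lVert Q\rVert^{4}\lVert P\rVert^{4}/16$ is false, not merely unproved. Let $\sigma\to\rho$ for a full-rank $\rho$: the right-hand side tends to zero with $\lVert Q\rVert^{4}=H^{2}$, while $\text{tr}(\{\rho,\sqrt{\rho}\}^{2})=4\,\text{tr}(\rho^{3})>0$, so $\mathcal{R}$ stays bounded away from zero; by continuity the inequality fails on an open set of pairs $(\rho,\sigma)$. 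No sharper operator inequality can rescue this route: your identity $\text{tr}([\rho,\sigma]\Omega)=\text{tr}(\{\rho,\sqrt{\sigma}\}[\sqrt{\sigma},\Omega])$ is correct, but following it with Cauchy--Schwarz inevitably leaves a state-dependent factor that does not vanish as $\rho\to\sigma$, whereas the theorem's denominator $(H^{2}-4H)^{2}=H^{2}(4-H)^{2}$ must.

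The mechanism your proposal never invokes, and which the paper's proof relies on, is a two-term Bessel (Parseval) inequality rather than a plain Cauchy--Schwarz. Setting $\alpha=i[\sqrt{\rho},\Omega]/\sqrt{2I(\rho,\Omega)}$, the operators $\sqrt{\rho}$ and $\alpha$ are orthonormal in the Hilbert--Schmidt inner product (orthogonality because $\text{tr}(\sqrt{\rho}\,[\sqrt{\rho},\Omega])=0$ by cyclicity), so expanding $\sqrt{\sigma}$ against this pair gives
\begin{equation*}
|\text{tr}(\sqrt{\sigma}\,\alpha)|^{2}\;\le\;\text{tr}(\sigma)-|\text{tr}(\sqrt{\rho}\sqrt{\sigma})|^{2}\;=\;1-\Bigl(1-\tfrac{H}{2}\Bigr)^{2}\;=\;\tfrac{H(4-H)}{4},
\end{equation*}
and combining this with $\text{tr}([\sqrt{\rho},\sqrt{\sigma}]\Omega)=\text{tr}(\sqrt{\sigma}[\Omega,\sqrt{\rho}])=\sqrt{2I(\rho,\Omega)}\,\text{tr}(\sqrt{\sigma}\,\alpha)$ (up to phase) produces the Hellinger denominator directly. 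It is the subtraction of the component of $\sqrt{\sigma}$ along $\sqrt{\rho}$ that makes the bound degenerate correctly as $\rho\to\sigma$; a single Cauchy--Schwarz, however the operators are grouped, only yields $2I(\rho,\Omega)\ge|\text{tr}([\sqrt{\rho},\sqrt{\sigma}]\Omega)|^{2}$ without the enhancement. Note also that the paper's derivation is carried out for $\text{tr}([\sqrt{\rho},\sqrt{\sigma}]\Omega)$ throughout, so the $[\rho,\sigma]$ appearing in the theorem's display is evidently a typographical slip; reading it literally is what led you to the Leibniz expansion, which is a detour even before the failing last step.
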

\begin{proof}
Let $\alpha = \frac{i[\sqrt{\rho}, \Omega]}{\sqrt{2 I(\rho,\Omega)}}$, such that $ \text{tr} (\alpha^2) = 1$. With this,
\begin{eqnarray} \nonumber 
i [\sqrt{\rho}, \Omega] &= \sqrt{2I(\rho, \Omega)}\,\alpha\\ \nonumber.
\end{eqnarray}
Now, we have 
\begin{equation}
i \sqrt{\sigma} [\sqrt{\rho}, \Omega] = \sqrt{2I(\rho, \Omega)}\,\sqrt{\sigma}\alpha\\ \nonumber.
\end{equation}
Taking trace, we have
\begin{equation}
i\,\text{tr}([\sqrt{\rho}, \sqrt{\sigma}\,]\,\Omega) =  \sqrt{2I(\rho, \Omega)}\,\text{tr}(\sqrt{\sigma}\,\alpha).\label{base}
\end{equation}
By the Parseval inequality, 
\begin{eqnarray*}
\text{tr}(\sigma)  \geq &|\text{tr}(\sqrt{\rho}\sqrt{\sigma})|^2 + |\text{tr}(\sqrt{\rho}\sqrt{\alpha^2})|^2\\ 
1 \geq & |\text{tr}(\sqrt{\rho}\sqrt{\sigma})|^2 + |\text{tr}(\sqrt{\rho}\sqrt{\alpha^2})|^2.
\end{eqnarray*}
We now have the inequality
\begin{equation}
\sqrt{1- |\text{tr}(\sqrt{\rho}\sqrt{\sigma})|^2} \geq  |\text{tr}(\sqrt{\rho}\sqrt{\alpha^2})|.\label{parceval}
\end{equation}
From the definition of Hellinger distance, we have 
\begin{align*}
H(\rho, \sigma) &= \| \sqrt{\rho} - \sqrt{\sigma} \|^2\\
                &= \text{tr}(\rho) + \text{tr}(\sigma) - \text{tr}(\sqrt{\rho}\sqrt{\sigma} ) - \text{tr}(\sqrt{\sigma}\sqrt{\rho})\\
                &= 2\Big(1- \text{tr}(\sqrt{\rho}\sqrt{\sigma})\Big),
\end{align*}
then 
\begin{equation*}
  \text{tr}(\sqrt{\rho}\sqrt{\sigma}) = 1- \frac{H(\rho,\sigma)}{2}.
\end{equation*}
Using eq.(\ref{base}) and eq.(\ref{parceval}), we have 
\begin{equation}
I(\rho, \Omega) \geq \frac{{2}| \text{tr}[\sqrt{\rho},\sqrt{\sigma}]\Omega|^2}{H^2 - 4 H}.
\end{equation}
Here $ I(\rho, \Omega)$ represents the uncertainty of the weak measurements in the state $\rho$.
Similarly 
\begin{equation}
I(\sigma, \Omega) \geq \frac{{2}| \text{tr}[\sqrt{\rho},\sqrt{\sigma}]\Omega|^2}{H^2 - 4 H}.
\end{equation}
Hence the uncertainty relation can be written as
\begin{equation}
I(\rho,\Omega)I(\sigma, \Omega) \geq 4 \frac{|\text{tr}[\rho, \sigma]\Omega|^4}{(H^2 - 4 H )^2}.
\end{equation}
\end{proof}
Consider a quantum system which is pre-selected in the state $ \lvert \psi \rangle $ and weak measurement is done on an observable $ \mathcal{O}$. After the weak measurement, the system is post-selected in the state $ \lvert \phi\rangle$ which yields the weak value as
\begin{equation}
  \langle \mathcal{O} \rangle_w = \frac{\langle \phi\lvert \mathcal{O}\rvert \psi \rangle}{\langle \phi \lvert \psi \rangle }.
\end{equation}
Defining an operator $ \mathcal{W} $ as
\begin{equation*}
 \mathcal{W} = \frac{ \mathsf{P} \mathcal{O}}{\alpha},
\end{equation*}
where $ \mathsf{P} = \lvert \phi \rangle \langle \phi \rvert $ and $ \alpha = \lvert \langle \phi \rvert \psi \rangle \lvert^2$. It should be noted that $ \mathcal{W} \lvert \psi \rangle = \frac{ \langle \mathcal{O} \rangle_w \lvert \phi \rangle } {\langle \psi \lvert \phi \rangle} $ such that
\begin{equation}
  \langle \mathcal{W} \rangle = \langle \psi \lvert \mathcal{W} \rvert \psi \rangle = \text{tr}(\rho\mathcal{W}) = \langle \mathcal{O} \rangle_w.
\end{equation}
Variance of an operator $ \mathcal{W}$ is defined as 
\begin{equation}
\mathcal{V}_w(\rho,\mathcal{O}) = \langle \psi \lvert (\mathcal{W} - \langle \mathcal{W} \rangle)(\mathcal{W^\dagger} - \langle \mathcal{W^\dagger} \rangle) \rvert \psi \rangle \label{vari}
\end{equation}
with $   \langle \psi \lvert \mathcal{W} \rvert \psi \rangle =\langle \mathcal{W} \rangle$ and $   \langle \psi \lvert \mathcal{W^\dagger} \rvert \psi \rangle =\langle \mathcal{W} \rangle^*$. With this eq.(\ref{vari}) can be rewritten as 
\begin{equation}
  \mathcal{V}_w(\rho,\mathcal{O}) =\langle \psi \lvert \mathcal{W}  \mathcal{W^\dagger} \rvert \psi \rangle - \langle \psi \lvert \mathcal{W}  \rvert \psi \rangle \langle \psi \lvert \mathcal{W^\dagger} \rvert \psi \rangle.
\end{equation} 
If $ \mathcal{W}$ is Hermitian, the above equation becomes 
\begin{align}\nonumber
 \mathcal{V}_w(\rho,\mathcal{O}) &= \langle \mathcal{W}^2 \rangle -  \langle \mathcal{W} \rangle^2.
\end{align}
We have 
\begin{align*}
\langle \mathcal{W}^2 \rangle &= \text{tr}\left(\frac{\mathsf{P^2}\mathcal{O}^2}{\alpha^2}\right)\\
                    &= \frac{\langle \psi \lvert \phi \rangle \langle \phi \vert \mathcal{O}^2\rvert \psi \rangle}{\lvert \langle \phi \rvert \psi \rangle\rvert^4}\\
                    &= \frac{\langle \mathcal{O}^2 \rangle_w }{\alpha}.
\end{align*}
With this, the weak variance of an operator $\mathcal{O}$ is written as 
\begin{equation}
\mathcal{V}_w(\rho,\mathcal{O}) = \frac{\langle \mathcal{O}^2 \rangle_w}{\alpha}-\langle \mathcal{O} \rangle_w^2.\label{weak_uncert}
\end{equation}
The variance defined in the equation above in terms of weak values is known as the weak variance of the operator $\mathcal{O}$, which may not be a real quantity. This weak variance can be useful in quantifying the extent of partial collapse of quantum states under weak measurement.
%This weak variance need not be real value, of observable $\mathcal{O}$, is much bigger that that of variance possessed by the same. That is 
%\begin{equation}
% \mathcal{V}_w(\rho,\mathcal{O})  \geq  \mathcal{V}(\rho,\mathcal{O}). 
%\end{equation}
%This means that the spread is much larger than the conventional variance. Since the variance imply uncertainty, eq.(\ref{weak_uncert}) is the uncertainty of the operator $ \mathcal{O}$ in a state $ \rho$. This introduces variance in weak regime, which could help in quantifying the spread of partial collapse of quantum states under weak measurement.
%%%%%%%%%%%%%%%%%%%%%%%%%%%%%%%%%%%%%%%%%%%%%%%%%%%%%%%%%%%%%%%%%%%%%%%%%%%%%%%%%%%%%%%%%%%%%%%%%%%%%%%%%%%%%%%%%%%%%%%%%%%%%%%%%%%%%%%%%%%%%%%%%%%%%%%%%%%%%%%%%%%%%%%%%%%%%%%%%%%%%%%%%%%%%%%%%%%%%%%%%%%%%%%%%%%%
\section{Summary and conclusion}
A coherence measure based on the weak measurement is proposed in this article. This measure generalizes the geometric coherence, which is quantified using Hellinger distance. We also identified weak H-MIN as a maximal weak coherence.
Least possible coherence produced as a result of infinitesimal disturbance on a quantum system is calculated using sequential weak measurement.
A correlation measure is defined with this coherence, which quantifies the correlation of weakly measured quantum system. Finally, we have introduced quantum uncertainty in the weak measurement regime, and the respective variance is computed in terms of weak values.
%%%%%%%%%%%%%%%%%%%%%%%%%%%%%%%%%%%%%%%%%%%%%%%%%%%%%%%%%%%%%%%%%%%%%%%%%%%%%%%%%%%%%%%%%%%%%%%%%%%%%%%%%%%%%%%%%%%%%%%%%%%%%%%%%%%%%%%%%%%%%%%%%%%%%%%%%%%%%%%%%%%%%%%%%%%%%%%%%%%%%%%%%%%%%%%%%%%%%%%%%%%%%%%%%%%%
%%%%%%%%%%%%%%%%%%%%%%%%%%%%%%%%%%%%%%%%%%%%%%%%%%%%%%%%%%%%%%%%%%%%%%%%%%%%%%%%%%%%%%%%%%%%%%%%%%%%%%%%%%%%%%%%%%%%%%%%%%%%%%%%%%%%%%%%%%%%%%%%%%%%%%%%%%%%%%%%%%%%%%%%%%%%%%%%%%%%%%%%%%%%%%%%%%%%%%%%%%%%%%%%%%%%%%%%%%


\begin{thebibliography}{99}
\bibitem{coh}T. Baumgratz,  M. Cramer and M. B. Plenio, \emph{Phys. Rev. Lett.} \textbf{113} (2014) 140401.
\bibitem{qkd}A. K. Ekert, \emph{Phys. Rev. Lett.} \textbf{67} (1991) 661.
\bibitem{assymetry1} G. Gour and R. W. Spekkens, \emph{New J. Phys.} \textbf{10} (2008) 033023.
\bibitem{quant_thrmo_1} M. Lostaglio, K. Korzekwa, D. Jennings, and T. Rudolph, \emph{Phys. Rev. X} \textbf{5} (2015) 021001.
\bibitem{quant_thrmo_2}M. Lostaglio, D. Jennings, and T. Rudolph, \emph{Nat. Commun.} \textbf{6} (2015) 6383.
\bibitem{quant_thrmo_3} P. Cwikli´nski, M. Studzi´nski, M. Horodecki, and J. Oppenheim, \emph{Phys. Rev. Lett.} \textbf{115} (2015) 210403.
\bibitem{quat_comm_1}H. L. Shi, S. Y. Liu, X. H. Wang, \textit{et al.}, \emph{Phys. Rev. A} \textbf{95}
(2017) 032307.
\bibitem{quat_comm_2}Y. L. Su, S. Y. Liu, X. H. Wang, \textit{et al.} \emph{Sci. Rep.} \textbf{8} (2018) 11081.
\bibitem{quat_comm_3}A. E. Rastegin, \emph{Quantum Inf. Process} \textbf{17} (2018) 179.
%%%%%%%%%%%%%%
\bibitem{coh_met1}V. Giovannetti, S. Lloyd, L. Maccone, \emph{Science} \textbf{306} (2004) 1330.
\bibitem{coh_met2}R. D-Dobrzanski and L. Maccone,  \emph{Phys. Rev. Lett.} \textbf{113} 250801 (2014).
\bibitem{quant_bio1}M. Sarovar, A. Ishizaki, G. R. Fleming and K. B. Whaley \emph{Nat. Phys.} \textbf{6}, 462-467 (2010).
\bibitem{quant_bio2}S. F. Huelga and M. B. Plenio, \emph{Contemp. Phys.} \textbf{54} 181207 (2013).
\bibitem{quant_bio3}S. Lloyd, \emph{J. Phys. Conf. Ser.} \textbf{302} 012037 (2011).
\bibitem{quant_bio4}N. Lambert, Y.-N. Chen, Y.-C. Cheng, C.-M. Li, G.-Y. Chen and F. Nori, \emph{Nat. Phys.} \textbf{9} 10-18 (2013).
\bibitem{cohe_hellin}Z. X. Jin and S. M. Fei, \emph{Phys. Rev. A} \textbf{97} 062342.
\bibitem{muthu_cohe}R. Muthuganesan, V. K. Chandrasekar and R. Sankaranarayanan, \emph{Phys Lett. A} \textbf{394} (2021) 127205.
\bibitem{coh_form} A. Winter and D. Yang, \emph{Phys. Rev. Lett.} \textbf{116} (2016) 120404.
\bibitem{coh_entag}A. Streltsov, U. Singh, H. S. Dhar, M. N. Bera, and G. Adesso, \emph{Phys. Rev. Lett.} \textbf{115} (2015) 020403. 
\bibitem{rel_Ent_coh} K. Bu, U. Singh, S-M. Fei, A. K. Pati, J. Wu, \emph{Phys. Rev. Lett.} \textbf{119} 150405 (2017).
\bibitem{Tsalis_Coh}Zhao, H., Yu, C-S. \emph{Sci. Rep.} 8 299 (2018).
\bibitem{reneyi_cohe}H. Zhu, M. Hayashi, L. Chen, \emph{J. Phys. A: Math. and Theor.} \textbf{50} 47 (2017.
\bibitem{entropy_cohe}M. L. Guo, Z. X. J, B. L., B. Hu and S. M. Fei, \emph{Quantum Inf. Process} \textbf{19} (2020) 382.
\bibitem{cohe_algo}H. L Shi, Si-Y. Liu, X. H Wang, W. L Yang, Z. Y. Yang, and H. Fan, \emph{Phys. Rev. A} \textbf{95} (2017) 032307 .
\bibitem{weak_mmt}Y. Aharonov, D.Z. Albert, L. Vaidman, \emph{Phys. Rev. Lett.} \textbf{60} (1988) 14.
\bibitem{weak2}O. Oreshkov and T. A. Brun, \emph{Phys. Rev. Lett.} \textbf{95} (2005) 110409.
\bibitem{Cohe_uncer} S. Luo and Y. Sun, \emph{Commun. Theor. Phys.} \textbf{71} (2019) 1443–1447.
\bibitem{part_cohe}C. Xiong, A. Kumar, M. Huang, S. Das, U. Sen,5,and J. Wu1, \emph{Phys. Rev. A.} \textbf{99} (2019) 032305. 
\bibitem{coh_Vs_Uncer}S. Luo and Y. Sun, \emph{Phys. Rev. A.}  \textbf{96} (2017) 022130.
\bibitem{part_cohe2}Y. Sun, Y. Mao and S. Luo, \emph{Eur. Phys. Lett.} \textbf{118} (2017) 6007.
\bibitem{part_cohe3}S. Kim, L. Li, A. Kumar and J. Wu, \emph{Phys. Rev. A} \textbf{98} (2018) 023306.
\bibitem{hmin}Indrajith. V.S, R. Muthuganesan and R.Sankaranarayananan, \emph{Phyisca A} \textbf{566} (2021) 125615.
\bibitem{skew_info}E. P. Wigner and M. M. Yanase, \emph{Proc. Nat. Acad. Sci. U,S,A.} \textbf{49} (1963) 910-918.
\bibitem{weak_corltn}Y. Wang, J. Hou, and X. Qi, \emph{Int. J. Quant. Inf.} \textbf{15} (2017) 1750041.
\bibitem{stinespring1}W. F. Stinespring, \emph{Proc. Amer. Math. Soc.} \textbf{6} (1955) 211.
\bibitem{stinespring2}M. Keyl, \emph{Phys. Rep.} \textbf{369} (2002) 431.
\bibitem{werner}R. F. Werner, \emph{Phys. Rev. A.} \textbf{40} (1989) 4277. 
\bibitem{skew-min}L. Li, Q.W. Wang, S.Q. Shen, M. Li, \emph{Eur. Phys. Lett.} \textbf{114} (2016) 10007.
\bibitem{skew_prop}E. H. Lieb, \emph{Adv. Math.} \textbf{11} (1973) 267.
\bibitem{total_uncertainty}S. Luo, \emph{Theo. Math. Phys.} \textbf{143(2)} (2015) 681-688.
\bibitem{operator_varince} Y. Sun and N. Li, \emph{Quant. Inf. Process.} \textbf{20} (2021) 25. 
%%%%%%%%%%%%%%
%\bibitem{part_cohe1} S. Luo and Y. Sun, \emph{Phys. Rev. A} \textbf{96} (2017) 022130.
%\bibitem{affinity}C. Xiong, et.al, \emph{Phy. Rev. A} \textbf{99} (2019) 032305.
%\bibitem{weak_op1}  L. Li, Q.W. Wang, S.Q. Shen, M. Li, \emph{Quant. Inf. Process.} \textbf{1181} (2016) 300.
%\bibitem{inform_dist} S. Luo and H. S. Song. \emph{Phys. Rev. A} \textbf{69} (2004) 032106.
\bibitem{skew_prop1} C-S. Yu, \emph{Phys. Rev. A} \textbf{95} (2017) 042337.
\bibitem{skew_prop2}S. Luo and Q. Zhang, \emph{Phys. Rev. A} \textbf{69} (2004) 032106.
\end{thebibliography}
\end{document}